\newtheorem{theorem}{Theorem}
\newtheorem{lemma}{Lemma}
\newtheorem{assumption}{Assumption}
 \let\old@ps@headings\ps@headings
 \let\old@ps@IEEEtitlepagestyle\ps@IEEEtitlepagestyle
 \def\confheader#1{%
 \def\ps@headings{%
 \old@ps@headings%
 \def\@oddhead{\strut\hfill#1\hfill\strut}%
 \def\@evenhead{\strut\hfill#1\hfill\strut}%
 }%
 \def\ps@IEEEtitlepagestyle{%
 \old@ps@IEEEtitlepagestyle%
 \def\@oddhead{\strut\hfill#1\hfill\strut}%
 \def\@evenhead{\strut\hfill#1\hfill\strut}%
 }%
 \ps@headings%
 }
\newcommand{\linebreakand}{%
  \end{@IEEEauthorhalign}
  \hfill\mbox{}\par
  \mbox{}\hfill\begin{@IEEEauthorhalign}
}
\begin{document}

\title{State and Input Constrained Model Reference Adaptive Control 
}

\author{Poulomee~Ghosh and Shubhendu~Bhasin
\thanks{Poulomee Ghosh and Shubhendu Bhasin are with Department of Electrical Engineering, Indian Institute of Technology Delhi, New Delhi, India. 
        {\tt\small (Email: Poulomee.Ghosh@ee.iitd.ac.in, sbhasin@ee.iitd.ac.in)}}}




\maketitle

\begin{abstract}
Satisfaction of state and input constraints is one of the most critical requirements in control engineering applications. In classical model reference adaptive control (MRAC) formulation, although the states and the input remain bounded, the bound is neither user-defined nor known a-priori. In this paper, an MRAC is developed for multivariable linear time-invariant (LTI) plant with user-defined state and input constraints using a simple saturated control design coupled with a barrier Lyapunov function (BLF). Without any restrictive assumptions that may limit practical implementation, the proposed controller guarantees that both the plant state and the control input remain within a user-defined safe set for all time while simultaneously ensuring that the plant state trajectory tracks the reference model trajectory. The controller ensures that all the closed-loop signals remain bounded and the trajectory tracking error converges to zero asymptotically.
Simulation results validate the efficacy of the proposed constrained MRAC in terms of better tracking performance and limited control effort compared to the standard MRAC algorithm.

\end{abstract}


\section{Introduction}
\label{sec:intro}
Modern control applications are characterized by physical, safety and energy limitations that can often be translated into state and input constraints on the plant dynamics. Without accounting for their effect during control design, these constraints are often met using an ad-hoc approach. While  conventional adaptive control techniques manage to control systems under parametric uncertainty, they fail to adhere to user-defined constraints. This challenge is exacerbated for safety critical systems where safety is of utmost importance. \\
A particular versatile class of adaptive controllers is model reference adaptive controllers (MRAC) that aim to control systems with parametric and matched uncertainties by tracking a user-defined stable reference model system \cite{classMRAC, mrac2, mracnew,  mrac3}. Although the control law ensures asymptotic tracking, the bound on the tracking error is typically not known a-priori. Moreover, classical MRAC does not allow for user-defined constraints on the state and the input. 
Large magnitude of control effort might exceed actuator's saturation limit and cause deterioration of the process. Hence, constraining the plant states and input within known user-defined bounds while meeting satisfactory performance objectives is a problem of practical interest.\\
Several control approaches have been proposed that either partially or fully address these challenges. The state constrained tracking control problem has been dealt with using model predictive control \cite{mpc1},\cite{mpc}, optimal control theory \cite{opt1},\cite{opt}, invariant set theory \cite{blanchini},\cite{set}, reference governor approach \cite{rga}, \cite{gilbert} etc. Most of these techniques involve solving an optimization problem at every time instant which can be computationally expensive. To deal with these problems, various tools like Barrier Lyapunov Function (BLF) \cite{BLF}, Control Barrier Function(CBF) \cite{CBF} etc. have been utilized to guarantee safety. 
The safety certificate of CBF was introduced in \cite{CBF} where CBF and control Lyapunov function (CLF) are unified through quadratic programs (CBF-CLF-QP approach) to ensure safety in terms of forward invariance of set \cite{CBF2}, \cite{CBF3}. For CBF-CLF-QP approach, although the controller guarantees safety, the system trajectory doesn't essentially converge to the origin. A generalized approach of BLF was presented in \cite{BLF},\cite{BLF2} to satisfy safety
constraints for output feedback control systems. In \cite{Lafflitto}, BLF is used with model reference
adaptive controller for constraining trajectory tracking error and adaptive gains within user-defined sets where the estimate of the controller parameters is assumed to be very close to the actual value. An alternative approach to ensure safety is the state transformation technique using BLF. An adaptive tracking controller is developed in transformed state space using BLF to guarantee performance bound in \cite{stc}. Although BLF-based controllers ensure that user-defined state constraints are met, they usually result in large control effort when the states approach the boundary of the constrained region, often violating the actuator's operating limits.\\
Constraining the input to account for actuator saturation limits is another issue of practical concern that has been tackled extensively in literature, especially using various saturated functions e.g. hyperbolic tangent, sigmoid etc. \cite{incon10, inconnew, incon11, incon12, incon13}. 
An adaptive controller is developed
in \cite{annaswamy}, \cite{Lav} for a single-input single-output (SISO) LTI plant in the presence of input constraints. In \cite{incon1}, an adaptive tracking control method has been investigated for MIMO nonlinear systems where an auxiliary design system was introduced to deal with input constraints. \\
All the aforementioned approaches either involve state or input constraints. Few control approaches exist that constrain both state and input for uncertain systems. MPC \cite{dhar, mpc11, dhar2} is a popular control approach where both state and input constraints can be included in the optimization routine, albeit at the cost of computational complexity. Further, limited or imperfect model knowledge often leads to conservative results. A recent work \cite{Anderson} develops an MRAC law that places user-defined bounds on state and input. The result, however, is achieved by developing an auxiliary reference model that complicates the analysis and design of adaptive laws.\\
The main contribution of the work is the development of MRAC for multivariable LTI systems while satisfying safety constraints on both the state and the input. To this end, a saturated controller with BLF-based adaptation laws are designed. Inspired by \cite{annaswamy}, the design of the saturated controller is intuitive and simple, while the classical MRAC adaptation laws are modified based on BLF to ensure that the states and the input always lie in a user-defined safe region. Closed-loop signals are guaranteed to be bounded and the trajectory tracking error can be proved to converge to zero asymptotically. The adaptation rate being unchanged, the proposed controller shows better tracking performance than the classical MRAC while simultaneously guaranteeing the safety constraints.\\
This paper is organized as follows.  Section II presents the problem formulation and preliminaries of standard MRAC framework, section III elucidates proposed BLF-based methodology which satisfies state and input constraints. To justify the preeminence of the proposed controller, simulation results and comparative analysis are shown in section IV while section V comprises of conclusion and future works.



\section{Problem Formulation}
\label{sec:LFSR}
Throughout this paper $\mathbb{R}$ denotes the set of real numbers, $\mathbb{R}^{p \times q}$ denotes set of $p\times q$ real matrices, the identity matrix in $\mathbb{R}^{p \times p}$ is denoted by $I_{p}$ and $\|.\|$ represents the Euclidian vector norm and corresponding equi-induced matrix norm.

\subsection{Problem Statement}

Consider a linear time-invariant system
\begin{align}
    \dot{x}(t)=Ax(t)+Bu(t)
    \label{plant}
\end{align}
where $x(t)\in \mathbb{R}^n$ denotes the system state, $u(t) \in \mathbb{R}^m$ denotes the control input, $A \in \mathbb{R}^{n \times n}$ is the unknown system matrix, and  $B \in \mathbb{R}^{n \times m}$ is the input matrix, assumed to be full rank and known. The pair (A,B) is assumed to be stabilizable.\\
A reference model is considered as
\begin{align}
    \dot{x}_r(t)=A_rx_r(t)+B_rr(t)
    \label{ref}
\end{align}
where $x_r(t)\in \mathbb{R}^n$ is the state of the reference model, $r(t) \in \mathbb{R}^{m}$ is a bounded piecewise continuous reference input, $A_r \in \mathbb{R}^{n \times n}$, $B_r \in \mathbb{R}^{n \times m}$ are known. It is assumed that $A_r$ is Hurwitz i.e. for every $Q=Q^T>0$, there exists $P=P^T>0$ such that $A_r^TP+PA_r+Q=0$. 
\\~\\
\textbf{State Constraint:}
For any positive constant $\beta$, the system states should remain within a user defined safe set given by $\Omega_x:=\{x\in \mathbb{R}^n:\|x\|\leq \beta\}.$
\\~\\
\textbf{Input Constraint:} Magnitude of the control input should remain bounded in a safe set given by $\Omega_u := \{u\in\mathbb{R}^{m}: \|u\|\leq u_{max}\}$, where $u_{max}$ is a user-defined positive constant.
\begin{assumption}
For any $\beta>0$, there exist positive constants $\alpha_1,\alpha_2\in \mathbb{R}$ such that
\begin{align}
    & \|x_r(t)\| \leq \alpha_1 < \beta\\
    & \|\dot{x}_r(t)\|\leq \alpha_2
    \end{align}
\end{assumption}

\begin{assumption}
There exists a feasible control policy $u(t)$ that satisfies both the input and state constraints for all time. 
\end{assumption}

\textbf{Control Objective:}
 The control objective is to design an input $u(t)$, such that $x(t)$ tracks $x_r(t)$ i.e. $e(t) \triangleq x(t)-x_r(t)\rightarrow 0$ as $t \rightarrow \infty$ while both the input and the state constraints are satisfied. Using Assumption 1, the state constraints can be transformed to the constraint on the tracking error: $\|e(t)\|<k_b$, $\forall t\geq 0$, where $k_b \in \mathbb{R}$ is a positive constant given by $k_b=\beta-\alpha_1$, i.e. $\|e(t)\|<k_b \implies\|x(t)\|\leq\beta$.

\subsection{Classical MRAC}

Consider the classical certainty equivalence adaptive control law 
\begin{align}
    u=\hat{K}_xx+\hat{K}_rr
    \label{ueq}
\end{align}
where $\hat{K}_x(t)\in \mathbb{R}^{m\times n}$ and  $\hat{K}_r(t)\in \mathbb{R}^{m\times m}$ are estimates of the controller parameters $K_x$ and $K_r$, respectively which satisfy the matching conditions given in Assumption 2.
\begin{assumption}
There exists controller parameters $K_x\in \mathbb{R}^{m\times n}$ and $K_r\in \mathbb{R}^{m\times m}$ such that the following matching conditions are satisfied.
\begin{align}
    &A+BK_x=A_r \nonumber\\
    &BK_r=B_r
    \label{mc}
\end{align}
\end{assumption}
Using (\ref{ueq}) and (\ref{mc}) , the closed-loop error dynamics can be obtained as
\begin{align}
  \dot{e}=A_re+B\tilde{K}_xx+B\tilde{K}_rr
\end{align}
where $\tilde{K}_x(t)\triangleq \hat{K_x}(t)- K_x\in\mathbb{R}^{m \times n}$ and $\tilde{K}_r(t)\triangleq \hat{K_r}(t) - K_r\in \mathbb{R}^{m \times m}$ denote the parameter estimation errors. 

%
The classical adaptive update laws are given as \cite{classMRAC}
\begin{align}
  &\dot{\hat{K}}_x=-{\Gamma_xB^TPex^T}\nonumber\\
    &\dot{\hat{K}}_r=-{\Gamma_r B^TPer^T} 
    \label{MRAC}
\end{align}
where, $\Gamma_x\in \mathbb{R}^{m \times m}$ and $\Gamma_r \in \mathbb{R}^{m \times m}$ are  positive definite adaptation gain matrices. using Lyapunov analysis, we can prove that all the closed-loop signals remain bounded and the trajectory tracking error converges to zero asymptotically \cite{classMRAC},\cite{mrac2}.\\
Note that, actuator limits and state constraints are typically not considered in the classical MRAC design, rather they are often imposed in an ad-hoc manner during implementation. The focus of the work is to consider both input and state constraints in the MRAC design procedure and rigorously show boundedness of closed-loop signals and provide stability guarantees.

\section{Proposed Methodology}

\subsection{Input Constraint Satisfaction Using Saturated Control Design}

Consider the linear time-invariant plant given in (\ref{plant}) and reference model given in (\ref{ref}). 
Consider an auxiliary control input $v(t)\in \mathbb{R}^m$ as
\begin{align}
&v(t)=\hat{K}_xx+\hat{K}_rr
\label{pc1}
\end{align}
where $v(t)\triangleq[v_1(t),\hdots, v_m(t)]^T $. Inspired by \cite{annaswamy}, the saturated feedback controller is designed for the vector case as
\begin{align}
&u_i(t)= \begin{cases}
v_i(t) & \text{if}\:\:\: |v_i(t)|\leq \frac{u_{max}}{\sqrt{m}}\\
\frac{u_{max}}{\sqrt{m}}sgn(v_i(t)) & \text{if}\:\:\: |v_i(t)|>\frac{u_{max}}{\sqrt{m}}
\end{cases},
&& i=1, \hdots , m
\label{pc2}
\end{align}
where $u(t)\triangleq[u_1(t),\hdots, u_m(t)]^T $.  
The closed-loop error dynamics is given as
\begin{align}
    \dot{e}=A_re+B\tilde{K}_xx+B\tilde{K}_rr+B\Delta u
    \label{edot}
\end{align}
where  $\Delta u(t)\in\mathbb{R}^{m}$ is defined as the difference between actual control input and auxiliary control input : $\Delta u(t) \triangleq u(t)-v(t)$. To mitigate the effect of the disturbance term $B\Delta u(t)$, consider an auxiliary error signal $e_1(t)$ with the following dynamics
\begin{align}
    \dot{e}_1=A_re_1+K_1(t)\Delta u && e_1(t_0)=0
\end{align}
where $K_1(t)\in \mathbb{R}^{n \times m}$ is time-varying controller parameter. Let $e_d(t)$ be the difference between the actual and auxiliary error signals: $e_d(t) \triangleq e(t)-e_1(t)$. The error dynamics of $e_d(t)$ is given by
\begin{align}
  \dot{e}_d=A_re_d+B\tilde{K}_xx+B\tilde{K}_rr+K_d\Delta u
\end{align}
where $K_d(t)\triangleq B-K_1(t)$.
\\
\subsection{State Constraint Satisfaction using BLF}\label{AA}

To ensure state constraint satisfaction, a BLF is introduced \cite{BLF}.

\begin{assumption}
The initial conditions $x(0)$ are chosen such that the initial trajectory tracking error is bounded as $\|e(0)\|<k_b$.
\end{assumption}




\begin{lemma}
For any positive constant $k_b$, let $\Omega_e := \{e \in \mathbb{R}^n : \|e\|<k_b \}\subset \mathbb{R}^n$ and $\Psi:=\mathbb{R}^N\times \zeta \subset \mathbb{R}^{N+n}$ be open sets. Consider the system dynamics given by
\begin{align}
    \dot{\mu}=f(t,\mu)
\end{align}
$\mu:=[e^T,\xi^T]^T\in \Psi$, where $\xi:=[\tilde{K}_x,\tilde{K}_r,K_d,K_1]$ is the augmentation of the unconstrained states and the function $f:\mathbb{R}_{+}\times \Psi \rightarrow \mathbb{R}^{N+n}$ is measurable  for each fixed $\mu$ and locally Lipschitz in $e$, piecewise continuous and locally integrable on $t$. Suppose, there exists positive definite, decrescent, quadratic candidate Lyapunov function $V_2(\xi):\mathbb{R}^N\rightarrow\mathbb{R}$ and continuously differentiable, positive definite, scalar function $V_1(e):\Omega_e \rightarrow \mathbb{R}$, defined in an open region containing the origin such that 
\begin{align}
    V_1(e)\rightarrow \infty && \|e\|\rightarrow k_b
\end{align}
The candidate Lyapunov function can be written as $V(\mu)=V_1(e)+V_2(\xi)$. Given Assumption 3, if the following inequality holds
\begin{align}
    \dot{V}=\frac{\partial V}{\partial \mu}f\leq 0
\end{align}
then $e(t)\in \Omega_e \forall t$.
\begin{proof}
For the proof of Lemma 1, see \cite{BLF}.
\end{proof}

\end{lemma}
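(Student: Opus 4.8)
The plan is to combine a monotonicity argument for $V$ with the blow-up property of $V_1$ and a standard maximal-interval-of-existence argument. First I would invoke the stated regularity of $f$ (measurable in $\mu$ for each fixed value, locally Lipschitz in $e$, piecewise continuous and locally integrable in $t$) to guarantee, via Carath\'eodory existence and uniqueness, that for the initial condition $\mu(0) = [e(0)^T,\xi(0)^T]^T$ with $\|e(0)\|<k_b$ (Assumption 3) there is a unique maximal solution $\mu(t)$ defined on a maximal interval $[0,\tau_{\max})$ with $\mu(t)\in\Psi$ throughout. The object of the proof is then to show $\tau_{\max}=\infty$ and that $e(t)$ never reaches $\|e\|=k_b$.

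Next I would use $\dot V = \frac{\partial V}{\partial\mu}f \le 0$ to conclude that $V(\mu(t))\le V(\mu(0))=:V_0$ for all $t\in[0,\tau_{\max})$. Since $\|e(0)\|<k_b$, the continuous positive definite function $V_1$ is finite at $e(0)$, and the quadratic $V_2$ is finite at $\xi(0)$; hence $V_0<\infty$. Because $V_1$ and $V_2$ are each nonnegative, the decomposition $V=V_1+V_2$ immediately yields the two bounds $V_1(e(t))\le V_0$ and $V_2(\xi(t))\le V_0$ for every $t$ in the interval of existence.

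The crucial step is to turn the bound $V_1(e(t))\le V_0$ into a uniform margin away from the boundary. Since $V_1(e)\to\infty$ as $\|e\|\to k_b$ and $V_1$ is continuous on $\Omega_e$, the sublevel set $\{e:V_1(e)\le V_0\}$ is a closed subset of $\Omega_e$ that stays clear of $\|e\|=k_b$; hence there exists $\delta=\delta(V_0)>0$ with $\|e(t)\|\le k_b-\delta$ for all $t\in[0,\tau_{\max})$. In parallel, positive definiteness and decrescence of the quadratic $V_2$ give $\underline{c}\,\|\xi\|^2 \le V_2(\xi)\le V_0$ for some $\underline{c}>0$, so $\|\xi(t)\|$ is uniformly bounded. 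Consequently $\mu(t)$ remains in a fixed compact set $\mathcal{K}\subset\Psi$ for all $t<\tau_{\max}$.

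Finally I would close by contradiction on $\tau_{\max}$: if $\tau_{\max}<\infty$, the standard escape property of maximal Carath\'eodory solutions would force $\mu(t)$ to leave every compact subset of $\Psi$ as $t\to\tau_{\max}$, contradicting $\mu(t)\in\mathcal{K}$. Hence $\tau_{\max}=\infty$, and the margin bound $\|e(t)\|\le k_b-\delta<k_b$ holds for all $t\ge0$, i.e. $e(t)\in\Omega_e$ for all $t$. I expect the main obstacle to be the careful handling of the Carath\'eodory solution concept and the escape-from-compact-sets argument under the weak (measurable/locally integrable in $t$) regularity assumed here, rather than the Lyapunov bookkeeping, which is routine; an alternative, consistent with the paper's own treatment, is simply to cite the corresponding statement in \cite{BLF}.
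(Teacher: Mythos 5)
Your proposal is correct, and it essentially reconstructs the argument that the paper itself omits: the paper's ``proof'' of Lemma~1 is only a citation to \cite{BLF}, and the chain you give (Carath\'eodory existence on a maximal interval, $V(\mu(t))\le V(\mu(0))<\infty$ by $\dot V\le 0$, the uniform margin $\|e(t)\|\le k_b-\delta$ extracted from the blow-up of $V_1$ at the boundary, boundedness of $\xi$ from the quadratic $V_2$, and the escape-from-compact-sets contradiction to get $\tau_{\max}=\infty$) is precisely the standard proof in that reference. The only point worth making explicit is that the pointwise condition $V_1(e)\to\infty$ as $\|e\|\to k_b$ does yield the \emph{uniform} margin you use (otherwise a sequence approaching the boundary with $V_1\le V_0$ would contradict the limit), and that the continuation theorem you invoke needs the local integrable bound on $f$ that the lemma's regularity hypotheses are there to supply.
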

To ensure constraint satisfaction on the trajectory tracking error, consider a BLF

\begin{align}
    &V_1(e) \triangleq \frac{1}{2} \log{ \frac{k_b^{'2}}{k_b^{'2}-e^TPe}}
\end{align}
defined on the set $\Omega^{'}_e:\{e\in\mathbb{R}^n: e^TPe \leq k_b^{'^2}\}$, where $k^{'}_b=k_b\sqrt{\lambda_{min}\{P\}}$. If $e^TPe\rightarrow k^{'^2}_b$, i.e.
when the constrained state $e(t)$ approaches the boundary of the safe set, the BLF $V_1(e)\rightarrow \infty$, guaranteeing the safety of the system.
The unconstrained states involve continuously differentiable and positive definite quadratic functions. 



Consider the candidate Lyapunov function $V(\mu):\xi\times \mathbb{R}\rightarrow \mathbb{R}$ as,
\begin{align}
    V(\mu)&=
    \frac{1}{2}
    \bigg[\log\frac{k^{'^2}_b}{k^{'^2}_b-e^TPe}+e_d^TPe_d+tr(\tilde{K}_x^T\Gamma_x^{-1}\tilde{K}_x)\nonumber\\
    &+tr(\tilde{K}_r^T\Gamma_r^{-1}\tilde{K}_r)+tr(K_d^T\Gamma_d^{-1}K_d) +tr(K_1^T\Gamma_1^{-1}K_1)\bigg]
    \label{lyap}
\end{align}
where $\Gamma_d \in \mathbb{R}^{n \times n}$ and $\Gamma_1 \in \mathbb{R}^{n \times n}$ are positive-definite matrices. Taking the time-derivative of $V$ along the system trajectory 
\begin{align}
    \dot{V}=&\frac{1}{2(k^{'^2}_b-e^TPe)}\bigg[e^TP(A_re+B\tilde{K}_xx+B\tilde{K}_rr+B\Delta u) \nonumber\\
    &+(A_re+B\tilde{K}_xx+B\tilde{K}_rr+B\Delta u)^TPe\bigg] \nonumber\\
    &+\frac{1}{2}\bigg[e_d^TP(A_re_d+B\tilde{K}_xx+B\tilde{K}_rr+k_d\Delta u) \nonumber\\
    & +(A_re_d+B\tilde{K}_xx+B\tilde{K}_rr+K_d\Delta u)^TPe_d\bigg]\nonumber\\
    &+tr(\tilde{K}_x^T\Gamma_x^{-1}\dot{\hat{K}}_x) +tr(\tilde{K}_r^T\Gamma_r^{-1}\dot{\hat{K}}_r) \nonumber\\
    &+tr(K_d^T\Gamma_d^{-1}\dot{K}_d)+tr(K_1^T\Gamma_1^{-1}\dot{K}_1) 
    \label{vdot}
    \end{align}
    Substituting $B=K_d+K_1$ in (\ref{vdot}),
    \begin{align}
    \dot{V}=& \frac{1}{2(k^{'^2}_b-e^TPe)} \bigg [e^T(A_r^TP+PA_r)e + e^TBP\tilde{K}_xx \nonumber\\
    &+ e^TBP\tilde{K}_rr +e^TP(K_d+K_1)^T\Delta u +x^T\tilde{K}_x^TB^TPe \nonumber\\
    &+ r^T\tilde{K}_r^TB^TPe + \Delta u^T(K_d+K_1)^TPe\bigg] + \frac{1}{2}\bigg[e_d^T(A_r^TP\nonumber\\
    &+PA_r)e_d +e_d^TPB\tilde{K}_xx+ e_d^TPB\tilde{K}_rr+ e_d^TPK_d\Delta u \nonumber\\
    &+x^T\tilde{K}_x^TB^TPe_d+r^T\tilde{K}_r^TB^TPe_d +\Delta u^T K_d^Te_d\bigg]\nonumber\\
    &+tr(\tilde{K}_x^T\Gamma_x^{-1}\dot{\hat{K}}_x) +tr(\tilde{K}_r^T\Gamma_r^{-1}\dot{\hat{K}}_r) +tr(K_d^T\Gamma_d^{-1}\dot{K}_d) \nonumber\\
    &+tr(K_1^T\Gamma_1^{-1}\dot{K}_1)
\end{align}
Adaptive update laws are defines as
\begin{align}
    &\dot{\hat{K}}_x=-\bigg[\frac{\Gamma_xB^TPex^T}{k^{'^2}_b-e^TPe}+ \Gamma_xB^TPe_dx^T\bigg]\nonumber\\
    &\dot{\hat{K}}_r=-\bigg[\frac{\Gamma_r B^TPer^T}{k^{'^2}_b-e^TPe}+\Gamma_rB^TPe_dr^T \bigg ]\nonumber\\
    &\dot{K}_d=-\bigg[\frac{\Gamma_dPe\Delta u^T}{k^{'^2}_b-e^TPe}+\Gamma_dPe_d\Delta u^T \bigg]\nonumber\\
    &\dot{K}_1=-\frac{\Gamma_1 Pe\Delta u^T}{k^{'^2}_b-e^TPe}
    \label{proposedlaw}
\end{align}
which yields
\begin{align}
    \dot{V}=-\frac{1}{2}\bigg(\frac{e^TQe}{k^{'^2}_b-e^TPe}+e_d^TQe_d\bigg) \leq 0
    \label{lyapfunc}
\end{align}
which is a negative semi-definite function. 


\begin{theorem}
Consider the linear time-invariant plant (\ref{plant}) and reference model (\ref{ref}). Given Assumptions 1-3, the proposed controller (\ref{pc1}), (\ref{pc2}) and the adaptive laws (\ref{proposedlaw}) ensure that the following properties are satisfied.
\begin{enumerate}
    \item[(i)]   The plant states remain within the user-defined safe set given by $\Omega_x:=\{x\in \mathbb{R}^n:\|x(t)\|\leq \beta\}.$ 
    \item[(ii)] The control effort is bounded within a user-defined safe set given by $\Omega_u := \{u\in\mathbb{R}^{m}: \|u\|\leq u_{max}\}$.
    \item[(iii)]  All the closed loop signals remain bounded. 
    \item[(iv)] The trajectory tracking error converges to zero asymptotically i.e. $e(t)\rightarrow 0$ as $t \rightarrow \infty$.
\end{enumerate}
\end{theorem}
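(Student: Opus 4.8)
The plan is to use the composite function $V(\mu)$ in \eqref{lyap} as the Lyapunov certificate required by Lemma~1: its first term plays the role of $V_1(e)$ (a BLF on the open ellipsoid $\Omega_e'=\{e:e^TPe<k^{'^2}_b\}$), while the remaining trace terms together with $e_d^TPe_d$ constitute $V_2(\xi)$. The derivative $\dot V$ has already been computed in the excerpt and, after substituting the update laws \eqref{proposedlaw}, shown to satisfy $\dot V=-\tfrac12\big(e^TQe/(k^{'^2}_b-e^TPe)+e_d^TQe_d\big)\le 0$. All four claims are then extracted from this single inequality, in the order (i)$\to$(ii)$\to$(iii)$\to$(iv).

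For (i): since $V_1(e)\to\infty$ as $e^TPe\to k^{'^2}_b$, and under Assumption~3 the initial condition satisfies $e(0)\in\Omega_e'$ so that $V(0)<\infty$, Lemma~1 (applied with $\Omega_e'$) gives $e(t)\in\Omega_e'$ for all $t$. Quantitatively, $V$ nonincreasing and $V\ge V_1$ force $\tfrac12\log\frac{k^{'^2}_b}{k^{'^2}_b-e^TPe}\le V(0)$, hence $e(t)^TPe(t)\le c<k^{'^2}_b$ with $c=k^{'^2}_b(1-e^{-2V(0)})$. Using $\lambda_{min}\{P\}\|e\|^2\le e^TPe$ and $k^{'}_b=k_b\sqrt{\lambda_{min}\{P\}}$ yields $\|e(t)\|<k_b$ for all $t$, i.e.\ $e(t)\in\Omega_e$; by Assumption~1 and the triangle inequality $\|x(t)\|\le\|e(t)\|+\|x_r(t)\|<k_b+\alpha_1=\beta$, so $x(t)\in\Omega_x$. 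Property (ii) is immediate from the controller structure alone: the saturation in \eqref{pc2} enforces $|u_i(t)|\le u_{max}/\sqrt m$ for each $i$, so $\|u(t)\|^2=\sum_{i=1}^m u_i(t)^2\le m\,(u_{max}/\sqrt m)^2=u_{max}^2$, i.e.\ $u(t)\in\Omega_u$.

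For (iii): boundedness of $V$ makes every summand in \eqref{lyap} bounded, so $e_d,\tilde{K}_x,\tilde{K}_r,K_d,K_1$ are bounded and therefore $\hat{K}_x=\tilde{K}_x+K_x$ and $\hat{K}_r=\tilde{K}_r+K_r$ are bounded; $e$ is bounded by (i), whence $e_1=e-e_d$ and $x=e+x_r$ are bounded (Assumption~1). Then $v=\hat{K}_xx+\hat{K}_rr$ is bounded ($r$ bounded by hypothesis), $u$ is bounded by (ii), and $\Delta u=u-v$ is bounded. The key point is that $e^TPe\le c<k^{'^2}_b$ keeps the denominator $k^{'^2}_b-e^TPe$ bounded below by the positive constant $k^{'^2}_b e^{-2V(0)}$, so the right-hand sides of \eqref{proposedlaw} are bounded and $\dot{\hat{K}}_x,\dot{\hat{K}}_r,\dot{K}_d,\dot{K}_1$ are bounded; likewise $\dot e,\dot e_d,\dot e_1$ are bounded from their respective dynamics. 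Hence every closed-loop signal is bounded, the composite system has no finite escape time and is defined on $[0,\infty)$, which also retroactively legitimizes the formal computation of $\dot V$ (the right-hand side is locally Lipschitz in $e$ because the saturation map is globally Lipschitz).

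For (iv): $V$ is bounded below by $0$ and nonincreasing, hence $V(t)\to V_\infty\ge 0$ and $\int_0^\infty\big(e^TQe/(k^{'^2}_b-e^TPe)+e_d^TQe_d\big)\,dt=2(V(0)-V_\infty)<\infty$. Since $0<k^{'^2}_b-e^TPe\le k^{'^2}_b$ and $Q>0$, this gives $e,e_d\in\mathcal{L}_2$; combined with $\dot e$ bounded from (iii), $e$ is uniformly continuous, so Barbalat's lemma yields $e(t)\to 0$ as $t\to\infty$. (Alternatively, $\ddot V$ is bounded—$\dot e,\dot e_d$ are bounded and the denominator is bounded away from $0$—so $\dot V$ is uniformly continuous and $\dot V\to 0$, which forces $e^TQe\to 0$, hence $e\to 0$.) I expect the main obstacle to be the bookkeeping inside step (iii): one must verify that boundedness of $V$ genuinely prevents the BLF denominator from degenerating and that no finite escape occurs, so that the $\dot V$ identity of the excerpt is valid on all of $[0,\infty)$ rather than merely locally; once that is secured, (iii) and (iv) reduce to the familiar ``boundedness from the Lyapunov function, then Barbalat'' routine of standard MRAC.
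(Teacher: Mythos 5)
Your proof is correct and follows essentially the same route as the paper: the BLF-based Lyapunov function with $\dot V\le 0$ and Lemma~1 for (i), the saturation structure for (ii), boundedness of every summand of $V$ for (iii), and $e\in\mathcal{L}_2$ with $\dot e\in\mathcal{L}_\infty$ plus Barbalat's lemma for (iv). Your explicit lower bound $k^{'^2}_b-e^TPe\ge k^{'^2}_b e^{-2V(0)}>0$ is a welcome refinement the paper leaves implicit, since it is what actually guarantees the adaptive laws \eqref{proposedlaw} stay bounded and that the first term of \eqref{lyapfunc} yields $e\in\mathcal{L}_2$.
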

\begin{proof}
(i) $V(\mu)$ in (\ref{lyap}) is positive definite and $\dot{V}(\mu)\leq 0$ from (\ref{lyapfunc}), which implies that $V(\mu(t))\leq V(\mu(0))$ $\forall t\geq0$. As $V(\mu)$ is defined in the region $\Omega^{'}_e:= \{[e^T , \xi^T] \in \Psi: e^TPe\leq k^{'^2}_b \}$, it can be be inferred from Lemma 1 that 
\begin{align}
  &e^TPe\leq k^{'^2}_b  \\
  \implies &e^TPe\leq \lambda_{min}\{P\}k_b^2
  \label{lambda1}
\end{align}
Now, for any positive-definite matrix $P$,
 \begin{align}
     e^TPe\geq \lambda_{min}\{P\}\|e\|^2
     \label{lambda2}
 \end{align}
Given Assumption 3, from (\ref{lambda1}) and (\ref{lambda2}) it can be proved that 
\begin{align}
    \|e\|\leq k_b && \forall t \geq 0
\end{align} 
 i.e. the trajectory tracking error will be constrained within the user-defined safe set : $e(t) \in \Omega_e$ $\forall t\geq0$.\\
Further, since $x(t)=e(t)+x_r(t)$ and the reference model states and the trajectory tracking error is bounded, i.e. $\|x_r(t)\|\leq \alpha_1$, $\|e(t)\|\leq k_b$, it can be easily shown that the proposed controller guarantees the plants states to be bounded within the user defined safe set
\begin{align}
    \|x(t)\|\leq k_b+\alpha_1= \beta && \forall t \geq 0
\end{align}
Thus the state constraint gets satisfied, i.e. $x(t)\in \Omega_x$ for all $t\geq 0$.\\

(ii) The control effort of the proposed controller $u(t)=[u_1(t),\hdots, u_m(t)]^T$ and $\|u(t)\|=\sqrt{u_1^2(t)+u_2^2(t)+\hdots+u_m^2(t)}$. For constraining the control input two cases have been considered.\\
\textit{Case 1:} $\|v_i(t)\|\leq \frac{u_{max}}{\sqrt{m}}$\\
For this case, $u_i(t)=v_i(t)$ and $\Delta u(t)=0$.
So, $|u_i|\leq \frac{u_{max}}{\sqrt{m}}$ which implies $\|u\|<u_{max}$\\
\textit{Case 2:} $\|v_i(t)\|> \frac{u_{max}}{\sqrt{m}}$\\
For this case, $u_i(t)=\frac{u_{max}}{\sqrt{m}}sgn(v_i(t))$ which proves $\|u\|<u_{max}$.\\

(iii) Since the closed loop tracking error as well as the controller parameter estimation errors remain bounded and $K_x(t)$ and $K_r(t)$ are constants, it can be concluded that the estimated parameters are also bounded i.e. $\hat{K}_x(t), \hat{K}_r(t) \in \mathcal{L}_{\infty}$ followed by ensuring the plant state $x(t)$ and control input $u(t)$ to be bounded for all time instances. Thus the proposed controller guarantees all the the closed loop signals to be bounded.\\


(iv) Since $V(\mu)>0$ and $\dot{V}(\mu)$ is negative semi-definite (\ref{lyapfunc}), it can be shown that 
$e$, $\tilde{K}_x$, $\tilde{K}_r$, $K_d$, $K_1$ $\in \mathcal{L}_{\infty}$, $x(t)\in \mathcal{L}_{\infty}$, and $\hat{K}_x, \hat{K}_r \in \mathcal{L}_{\infty}$. Further, from (\ref{lyapfunc}) it can be shown that $e(t)\in \mathcal{L}_2$ and from (\ref{edot}) it can be inferred that $\dot{e}(t)\in \mathcal{L}_{\infty}$. Therefore, $e(t)$ is uniformly continuous. Consequently, using Barbalat's Lemma \cite{slotine}, it can be proved that $e(t)$ converges to zero asymptotically as $t \rightarrow \infty$.
\end{proof}

\section{Simulation Results}
To demonstrate the efficacy of the proposed algorithm, a multivariable LTI plant and reference model are considered.

\begin{equation*}
\scriptstyle 
\begingroup 
\setlength\arraycolsep{-1pt}
A=\begin{pmatrix}
    \scriptstyle  -0.322 & \scriptstyle 0.064& \scriptstyle  0.0364 & \scriptstyle -0.9917 &\scriptstyle  0.0003 & \scriptstyle  0.0008 &\scriptstyle  0\\
\scriptstyle 0 & \scriptstyle 0& \scriptstyle  1&\scriptstyle  0.0037 &\scriptstyle 0 &\scriptstyle 0 &\scriptstyle 0\\
   \scriptstyle -30.6492 &\scriptstyle 0 &\scriptstyle -3.6784 &\scriptstyle 0.6646 &\scriptstyle  -0.7333 & \scriptstyle 0.1315 &\scriptstyle  0\\
   \scriptstyle 8.5396 &\scriptstyle  0 &\scriptstyle -0.0254 &\scriptstyle  -0.4764 
   &\scriptstyle -0.0319 &\scriptstyle -0.0620 &\scriptstyle  0\\
   \scriptstyle 0&\scriptstyle  30&\scriptstyle  0 & \scriptstyle 10 & \scriptstyle 20.2 &\scriptstyle  0& \scriptstyle  0\\
   \scriptstyle -1 &\scriptstyle  0 & \scriptstyle  10 & \scriptstyle  0 & \scriptstyle 0 & \scriptstyle 10.25 &\scriptstyle 0\\
   \scriptstyle 0 & \scriptstyle 0 &\scriptstyle 0 & \scriptstyle 12.2958 &\scriptstyle  0 & \scriptstyle  0 & \scriptstyle -1
    \end{pmatrix} \endgroup 
    \begingroup 
\setlength\arraycolsep{-2pt}    
    \scriptstyle B=\begin{pmatrix}
    \scriptstyle 0 & \scriptstyle 0\\
    \scriptstyle 0 & \scriptstyle 0\\
    \scriptstyle 0 & \scriptstyle 0\\
    \scriptstyle 0 & \scriptstyle 0\\ 
    \scriptstyle 10.1 & \scriptstyle 0\\
    \scriptstyle 0 & \scriptstyle -4.25\\
    \scriptstyle 0 & \scriptstyle 0
\end{pmatrix} 
\endgroup
\end{equation*}

\begin{equation*}
\begingroup 
\setlength\arraycolsep{-1pt}    
\scriptstyle A_r=\begin{pmatrix}
    \scriptstyle  -0.322 & \scriptstyle 0.064& \scriptstyle  0.0364 & \scriptstyle -0.9917 &\scriptstyle  0.0003 & \scriptstyle  0.0008 &\scriptstyle  0\\
\scriptstyle 0 & \scriptstyle 0& \scriptstyle  1&\scriptstyle  0.0037 &\scriptstyle 0 &\scriptstyle 0 &\scriptstyle 0\\
   \scriptstyle -30.6492 &\scriptstyle 0 &\scriptstyle -3.6784 &\scriptstyle 0.6646 &\scriptstyle  -0.7333 & \scriptstyle 0.1315 &\scriptstyle  0\\
   \scriptstyle 8.5396 &\scriptstyle  0 &\scriptstyle -0.0254 &\scriptstyle  -0.4764 
   &\scriptstyle -0.0319 &\scriptstyle -0.0620 &\scriptstyle  0\\
   \scriptstyle 0&\scriptstyle  0&\scriptstyle  0 & \scriptstyle 0 & \scriptstyle -20.2 &\scriptstyle  0& \scriptstyle  0\\
   \scriptstyle 0 &\scriptstyle  0& \scriptstyle  0& \scriptstyle  0 & \scriptstyle 0 & \scriptstyle -20.2 &\scriptstyle 0\\
   \scriptstyle 0 & \scriptstyle 0 &\scriptstyle 0 & \scriptstyle 12.2958 &\scriptstyle  0& \scriptstyle  0 & \scriptstyle -1
    \end{pmatrix} \endgroup
    \begingroup 
\setlength\arraycolsep{-2.5pt}    
    \scriptstyle B_r=\begin{pmatrix}
    \scriptstyle 0 & \scriptstyle 0\\
    \scriptstyle 0 & \scriptstyle 0\\
    \scriptstyle 0 & \scriptstyle 0\\
    \scriptstyle 0 & \scriptstyle 0\\ 
    \scriptstyle 20.2 & \scriptstyle 0\\
    \scriptstyle 0 & \scriptstyle 20.2\\
    \scriptstyle 0 & \scriptstyle 0
\end{pmatrix}
    \endgroup
\end{equation*}
\\~\\
The reference signal is considered as: $r(t)=[\exp(-t/10);\exp(-t/20)]$. The other parameters are chosen as: $\Gamma_x=\begin{bmatrix}
5 & 0\\
0& 5
\end{bmatrix}$, $\Gamma_r=\begin{bmatrix}
5 & 0\\
0& 5
\end{bmatrix}$, $\Gamma_1=I_{n \times n}$, $\Gamma_d=I_{n \times n}$, 
$u_{max}=2.5$, $\beta=2$, $\alpha_1=1.5$ and $k_b=0.5$.\\
The desired controller must satisfy the input constraint $\|u\|\leq2.5$, while ensuring that plant states within user-defined bound i.e. $\|x\|\leq2$. Given Assumption 1, $\|x_r\| \leq 1.5 $, the state constraint is equivalent to satisfying the constraint in the error i.e. $\|e\|\leq0.5$. It is assumed that the initial plant states remain within the user-defined safe set i.e. $\|x(0)\|\leq2$.\\
To gauge the safety and performance of the proposed control law, we compare it with the classical MRAC controller (in (\ref{ueq}) and (\ref{MRAC})). The reference signal is considered as: $r(t)=[\exp(-t/10);\exp(-t/20)]$. The adaptation gains are chosen as: $\Gamma_x=\begin{bmatrix}
25 & 0\\
0& 25
\end{bmatrix}$, $\Gamma_r=\begin{bmatrix}
25 & 0\\
0& 25
\end{bmatrix}$.\\
Note that, adaptation gains are tuned to achieve better tracking performance for both proposed controller and classical MRAC. 
\begin{figure}[h!]
\centering
\includegraphics[width=\linewidth]{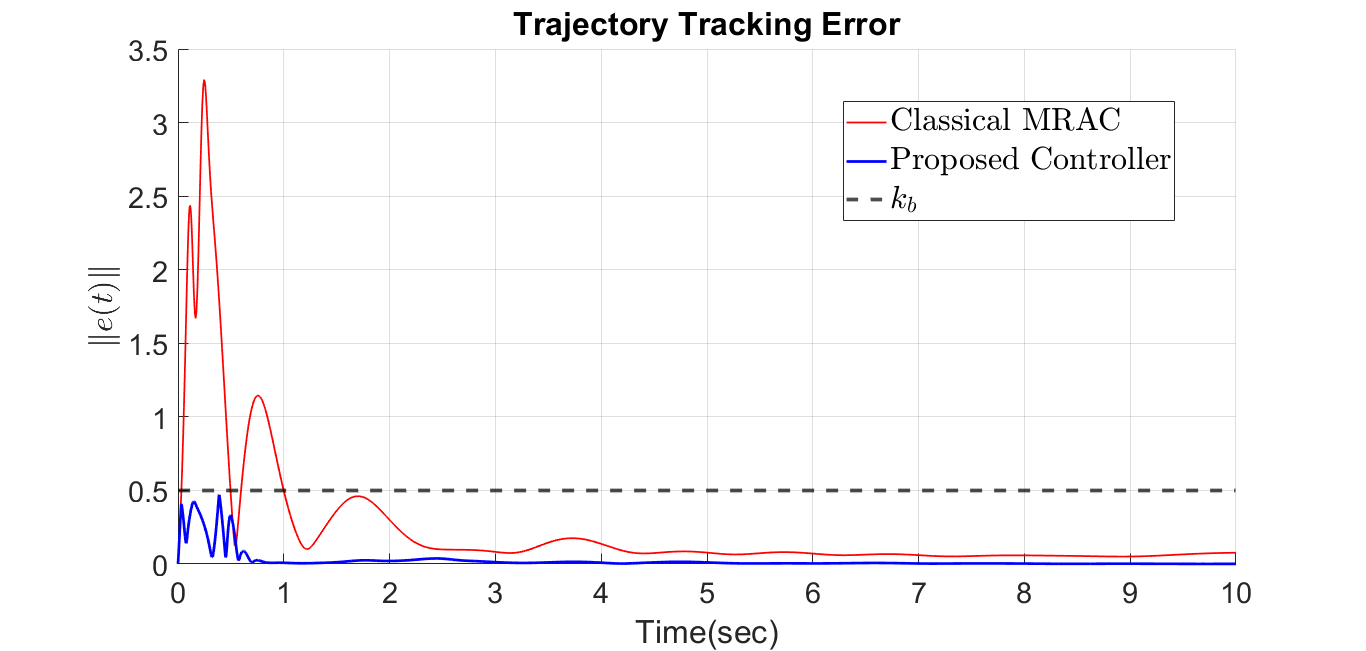}
\caption{Comparative analysis of trajectory tracking error using the proposed control law (\ref{proposedlaw}) and classical MRAC law (\ref{MRAC}).}
\end{figure}

\begin{figure}[h!]
\centering
\includegraphics[width=\linewidth]{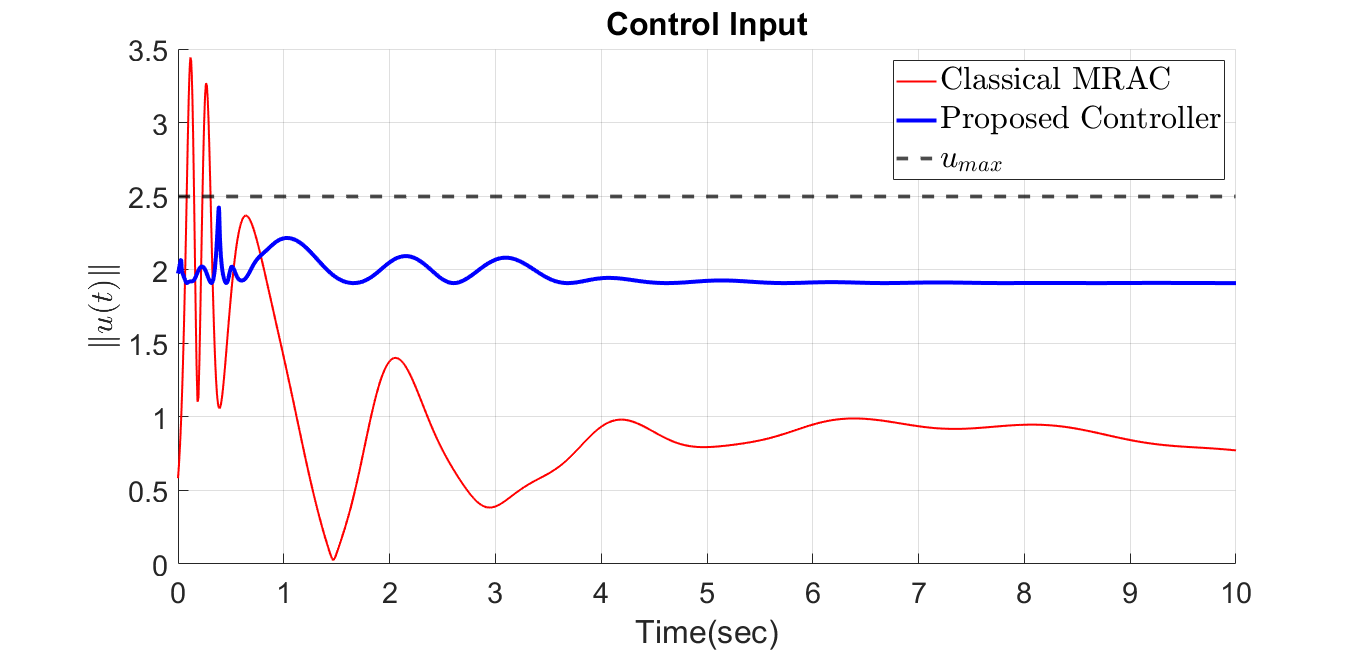}
\caption{Control input using proposed controller (\ref{proposedlaw}) and classical MRAC (\ref{MRAC}).}
\end{figure}
Fig.1 shows the trajectory tracking error using the proposed method where the user-defined constraint is satisfied while in conventional MRAC case, the norm of the trajectory tracking error goes beyond the safe region. Furthermore, the proposed controller ensures that the tracking error converges to zero as time tends to infinity and the rate of convergence is higher than the classical MRAC.
The proposed control architecture bounds the control effort in user-defined constrained region (Fig. 2)  while for the conventional MRAC the bound on the control input can not be known \textit{a-priori}. Fig. 3 shows the state trajectories of the plant and reference model.
\begin{figure}[h!]
\centering
\includegraphics[width=\linewidth]{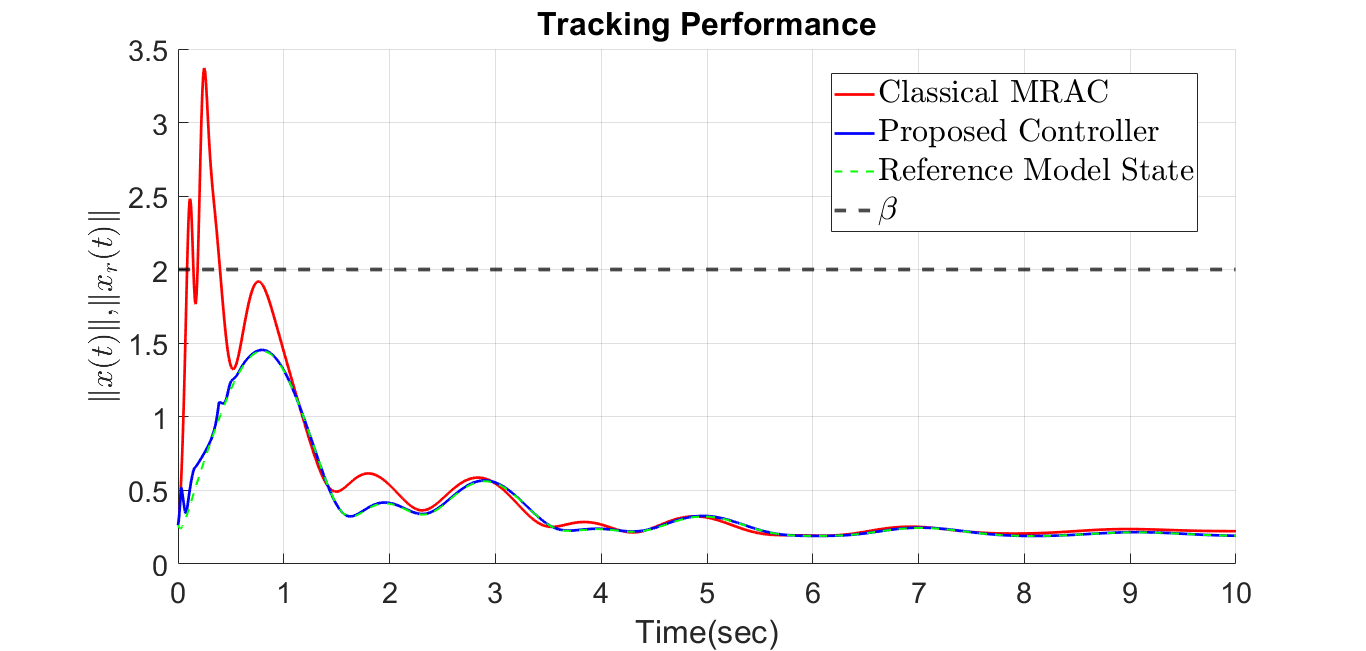}
\caption{Tracking performance of the plant using proposed controller (\ref{proposedlaw}) and classical MRAC (\ref{MRAC}).}
\end{figure}





 It is seen that increasing the adaptation gain leads to better tracking performance, in general for both the classical and the proposed controller, although the response becomes more oscillatory. The improved tracking performance of the classical MRAC is achieved at the cost of greater control effort leading to violation of the input constraints. Further, in the classical MRAC case, the high frequency oscillation in the control input may even violate the actuation rate limits. On the other hand, the state and the input constraints are never violated in case of the proposed controller.


\section{Conclusion}
In this paper, a  novel MRAC architecture is proposed for multivariable LTI systems by strategically combining BLF with a saturated controller which guarantees  both the plant state and the control input remain bounded within user-defined safe sets. The proposed controller also ensures that the trajectory tracking error asymptotically converges to zero and the closed-loop signals remain bounded. Simulation studies validate the efficacy of the proposed control law comparing to classical MRAC.
Extending the work to uncertain nonlinear systems and exploring robustness properties is an important area of future research.

\bibliographystyle{ieeetr}
\bibliography{ref}

\end{document}